\def\dOi{11(2:15)2015}
\begin{document}

\title{On absorption in semigroups and $n$-ary semigroups}

\author[B.~Ba\v si\'c]{Bojan Ba\v si\'c}	%required
\address{Department of Mathematics and Informatics\\
University of Novi Sad\\
Trg Dositeja Obradovi\'ca 4\\
21000 Novi Sad\\
Serbia}	%required
\email{bojan.basic@dmi.uns.ac.rs}  %optional
%\thanks{thanks 1, optional.}	%optional

%% etc.

%% required for running head on odd and even pages, use suitable
%% abbreviations in case of long titles and many authors:

%% mandatory lists of keywords and classifications:
\keywords{absorption, semigroups, $n$-ary semigroups, constrained satisfaction problem}
%\titlecomment{OPTIONAL comment concerning the title, \eg, if a variant
%or an extended abstract of the paper has appeared elsewehere}
%%%%%%%%%%%%%%%%%%%%%%%%%%%%%%%%%%%%%%%%%%%%%%%%%%%%%%%%%%%%%%%%%%%%%%%%%%%

%% the abstract has to PRECEED the command \maketitle:
%% be sure not to issue the \maketitle command twice!

\begin{abstract}
  \noindent The notion of absorption was developed a few years ago by Barto and Kozik and immediately found many applications, particularly in topics related to the constraint satisfaction problem. We investigate the behavior of absorption in semigroups and $n$-ary semigroups (that is, algebras with one $n$-ary associative operation). In the case of semigroups, we give a simple necessary and sufficient condition for a semigroup to be absorbed by its subsemigroup. We then proceed to $n$-ary semigroups, where we conjecture an analogue of this necessary and sufficient condition, and prove that the conjectured condition is indeed necessary and sufficient for $\mathbf B$ to absorb $\mathbf A$ (where $\mathbf A$ is an $n$-ary semigroup and $\mathbf B$ is its $n$-ary subsemigroup) in the following three cases: when $\mathbf A$ is commutative, when $|A\setminus B|=1$ and when $\mathbf A$ is an idempotent ternary semigroup.
\end{abstract}

\maketitle

%% start the paper here:
\section{Introduction}

Let $\mathbf A$ be an algebra and $\mathbf B\leqslant \mathbf A$. We say that $\mathbf B$ \emph{absorbs} $\mathbf A$, denoted by $\mathbf B\trianglelefteq \mathbf A$, iff there exists an idempotent term $t$ in $\mathbf A$ (that is, $t(a,a,\dots,a)\approx a$ for each $a\in A$) such that for each $a\in A$ and $b_1,b_2,\dots,b_m\in B$ we have
\[
\renewcommand{\arraystretch}{1.5}
\begin{array}{r@{}l}
t(a,b_2,b_3,\dots,b_m)&{}\in B;\\
t(b_1,a,b_3,\dots,b_m)&{}\in B;\\
\multicolumn{2}{c}{\vdots}\\
t(b_1,b_2,b_3,\dots,a)&{}\in B.
\end{array}
\]

The notion of absorption was developed a few years ago by Barto and Kozik, and immediately found many applications \cite{1,2,3,4,5}. We would particularly like to mention that Bulatov's dichotomy theorem for conservative CSPs \cite{bul}, with a deep and complicated proof (nearly 70 pages long), was reproved using these techniques on merely 10 pages \cite{6}. Loosely speaking, the main idea of absorption is that, when $\mathbf B\trianglelefteq \mathbf A$ where $\mathbf B$ is a proper subalgebra of $\mathbf A$, then some induction-like step can often be applied.
%For more details, we refer the readers to the already mentioned papers. Given such a natural core idea, all the mentioned examples of CSP-related applications and the fact that results on CSP are very applicable both in theoretical computer science as well as in real life, absorption is a very useful tool.

This naturally leads to the following question: given a finite algebra $\mathbf A$ and its subalgebra $\mathbf B$, is it decidable whether $\mathbf B\trianglelefteq \mathbf A$? This question turns out to be quite hard. Let us mention that the notion of absorption emerged as a generalization of the notion of the so-called near-unanimity term (in particular: an idempotent finite algebra $\mathbf A$ has a near-unanimity term iff every singleton absorbs $\mathbf A$). It was asked in 1995 whether the existence of a near-unanimity term in a finite algebra $\mathbf A$ is decidable \cite{mck}, and it took a while to finally prove that it is \cite{miklos} (another interesting point here is that, before this proof appeared, there were some evidences suggesting that the answer is actually negative). Some recent results on deciding absorption are given in \cite{k1,k2,k3}; as expected, the proposed algorithms are quite complex.

In this paper we show that in semigroups the absorption is much easier to grasp. Namely, for absorption in semigroups, in Theorem \ref{tt1} we provide a necessary and sufficient condition that is very easy to check. After that, we turn to $n$-ary semigroups, that is, algebras $\mathbf A=(A,f)$ where $f$ is an $n$-ary associative operation. We conjecture an analogue of the necessary and sufficient condition for $\mathbf{B}\trianglelefteq \mathbf{A}$ from Theorem \ref{tt1}, and we then prove the conjecture in the following cases: when $f$ is commutative, when $|A\setminus B|=1$ and when $\mathbf A$ is an idempotent ternary semigroup.

Let us say a few words on a possible application of these results. Namely, one of the most interesting algebraic results toward the CSP Dichotomy Conjecture of Feder and Vardi \cite{fv} is the proof that, if a finite relational structure $\Gamma$ does not admit any so-called weak near-unanimity (wnu) polymorphism, then $CSP(\Gamma)$ is $\textsf{NP}$-complete (see \cite{bjk}, where Bulatov, Jeavons and Krokhin gave a different algebraic sufficient condition for $CSP(\Gamma)$ to be $\textsf{NP}$-complete, and \cite{mm}, where Mar\'oti and McKenzie showed that this condition is equivalent to the nonexistence of a wnu polymorphism). Bulatov, Jeavons and Krokhin conjectured that the other direction also holds, that is, that the existence of a wnu polymorphism compatible with $\Gamma$ implies that $CSP(\Gamma)$ is in $\textsf P$ (this is known under the name Algebraic Dichotomy Conjecture). This has been checked for some relational structures of a special form, as well as for all relational structures but given the existence of a wnu polymorphism of a special form, and in all the cases known so far the results agree with the conjecture. In many of these works the absorption was the key ingredient in the proof (see, e.g., the references from the beginning of this section).

In particular, by the result of Jeavons, Cohen and Gyssens \cite{jcg}, we know that whenever $\Gamma$ admits a semilattice polymorphism (a semilattice operation is a binary operation that is idempotent, commutative and associative), then $CSP(\Gamma)$ is in $\textsf P$. Theorem \ref{tt1} from the present paper gives an exact description of when an algebra is absorbed by its subalgebra in the class of algebras with a binary associative operation (which is a wider class than the class of algebras with a semilattice operation). This provides a direct link between Theorem \ref{tt1} and the current line of attack on the Dichotomy Conjecture. Concerning our generalization to $n$-ary semigroups, so far there is no result (at least up to the author's knowledge) toward the Dichotomy Conjecture that directly relates to Conjecture \ref{conj} in a similar manner; however, since there are many results of this kind toward the Dichotomy Conjecture and many researchers are actively working on it, it is not hard to imagine that such a result exists and is just waiting to be discovered, and in fact, Conjecture \ref{conj} might serve as a motivation for it.

And of course, speaking about the notion of absorption itself, Theorem \ref{tt1} and Conjecture \ref{conj} may shed some light on the (presently quite unclear) behavior of absorption, since we now have a natural class of algebras in which the absorption behaves in a very predictable (but nontrivial) way. It might be a very useful research direction to discover whether there is a deeper reason for this nice behavior of absorption in semigroups and (conjecturally) $n$-ary semigroups, and whether this reason may help to describe the behavior of absorption in other classes of algebras.

%It is interesting to compare the present work with \cite{groups}, where various aspects of absorption in the class of all groups (instead of semigroups) were studied. It turns out that the absorption in groups behaves in a much more unpredictable way than in semigroups.

\section{Absorption in semigroups}

The main (in fact, the only) theorem in this section is the following one.

\begin{thm}\label{tt1}
Let $\mathbf A=(A,\cdot)$ be a semigroup, and let $\mathbf B\leqslant\mathbf A$. Then $\mathbf B\trianglelefteq \mathbf A$ if and only if $ab\in B$ and $ba\in B$ for each $a\in A$, $b\in B$, and there exists a positive integer $k>1$ such that $a^k\approx a$ for each $a\in A$.
\end{thm}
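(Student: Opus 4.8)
The plan is to prove both directions of the equivalence. For the $(\Leftarrow)$ direction, I would construct an explicit absorbing term from the hypotheses. The condition $a^k\approx a$ tells me that each element generates a finite cyclic subsemigroup that is in fact a group (since $a^{k-1}$ acts as an idempotent and $a\cdot a^{k-1}=a^k=a$ forces $a^{k-1}$ to be a two-sided identity on $\langle a\rangle$, making $\langle a\rangle$ a cyclic group of order dividing $k-1$). The closure condition $ab,ba\in B$ says $B$ is an ideal-like absorbing subset under multiplication by arbitrary elements on either side. The natural guess for the absorbing term is something like
\[
t(x_1,x_2,\dots,x_m)=\bigl(x_1x_2\cdots x_m\bigr)^{k-1}\cdot x_1x_2\cdots x_m,
\]
but since $y^k\approx y$ this collapses, so I instead want a term that is idempotent yet forces a product containing at least one $B$-factor into $B$. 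The key observation is that if a word in the $x_i$ contains even one letter from $B$, then using $ab,ba\in B$ repeatedly shows the whole product lands in $B$. So I would take $t(x_1,\dots,x_m)=(x_1x_2\cdots x_m)^{k-1}$ with arity $m=k-1$ chosen suitably, or more carefully a symmetric-enough product, and verify it is idempotent because $t(a,\dots,a)=a^{m(k-1)}$, which reduces to $a$ once $m(k-1)\equiv 1\pmod{k-1}$ is arranged; the cleanest choice is a single long product raised to make the diagonal value collapse to $a$ via $a^k\approx a$.

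For the $(\Rightarrow)$ direction, I would assume $\mathbf B\trianglelefteq\mathbf A$ via some idempotent term $t$ and extract the two necessary conditions. Any term $t$ in a semigroup is, up to associativity, a product of its variables in some order with multiplicities, say $t(x_1,\dots,x_m)\approx x_{i_1}x_{i_2}\cdots x_{i_N}$ where every variable appears at least once (idempotency forces each variable to appear, since otherwise $t(a,\dots,a)$ could not equal $a$ for all $a$ in a nontrivial semigroup). To get $ab\in B$ and $ba\in B$, I would plug the single off-diagonal element $a\in A$ into one coordinate and $b\in B$ into the rest, using the absorption inclusion $t(b_1,\dots,a,\dots,b_m)\in B$, and read off that the resulting product equals $ba$ or $ab$ (or a product that simplifies to these using that $B$ is a subsemigroup). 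To get $a^k\approx a$, I would set every variable equal to a single $a\in A$: idempotency gives $a^N\approx a$ where $N$ is the total length of the word, and since $N>1$ whenever the term genuinely depends on more than one displayed product (the subtlety being to ensure $N>1$, i.e. the term is not a projection), I set $k=N$.

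The main obstacle I expect is twofold. First, in the $(\Rightarrow)$ direction I must rule out the degenerate case where the absorbing term is a projection $t(x_1,\dots,x_m)=x_j$, since a projection trivially satisfies idempotency but yields $N=1$ and gives no power identity; I would need to argue that a projection cannot witness absorption of a proper subsemigroup, forcing $N>1$, whereas if $\mathbf B=\mathbf A$ the conditions hold vacuously with a suitable $k$. Second, and more delicate, is the passage from the absorption inclusions to the clean statements $ab\in B$ and $ba\in B$ for \emph{all} $a\in A,b\in B$: the term $t$ is a fixed word of length $N$, so plugging in yields a specific long product rather than the bare product $ab$, and I must use the subsemigroup closure of $B$ together with the fact that the one non-$B$ letter $a$ sits in a determined position to collapse the long product down to a two-letter statement. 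Establishing that this collapse is faithful, and that it produces both orders $ab$ and $ba$ by choosing which coordinate receives $a$, is where the argument requires care; the associativity and the structure of finite-exponent semigroups should make this routine once the bookkeeping is set up.
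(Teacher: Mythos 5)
Your proposal has a genuine gap, and it sits exactly at the point you defer as ``routine once the bookkeeping is set up'' --- that step is not routine; it is the entire content of the hard direction. If you plug $a$ into one coordinate of $t$ and $b$ into all the others, the resulting word has the form $b^{s_0}ab^{s_1}ab^{s_2}\cdots ab^{s_r}$, and no amount of subsemigroup closure or associativity collapses it to $ab$ or $ba$: in a general semigroup satisfying $a^k\approx a$, knowing that (say) $bab\in B$ tells you nothing about whether $ab\in B$. The paper's proof circumvents this with two ideas absent from your sketch. First, it substitutes not $a$ but the element $ab$ for the \emph{leftmost} variable $x_1$, and $b^{k-1}$ for all the others; since $(ab)b^{k-1}=ab^k\approx ab$, every $b^{k-1}$-block is absorbed into the block preceding it, so the word genuinely does collapse --- but to $(ab)^{d_1}$, where $d_1$ is the multiplicity of $x_1$ in $t$. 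This yields only that some \emph{power} of $ab$ lies in $B$, which is still far from $ab\in B$ (that power may be, e.g., an idempotent of the cyclic group generated by $ab$). Second, a further round of substitutions (with $(ab)^{d_1}$ at the first coordinate, $(ab)^{r}$ at the $i$-th, $b^{k-1}$ elsewhere, where $r\equiv 1\pmod{k-1}$ is chosen larger than $(m-1)d_1$) shows $(ab)^{d_1^2+rd_i}\in B$ for each $i\geqslant 2$, and multiplying all of these together with the correction factor $((ab)^{d_1})^{r-(m-1)d_1}\in B$ produces the exponent $r(d_1+d_2+\cdots+d_m)=rk\equiv 1\pmod{k-1}$, whence $(ab)^{rk}\approx ab\in B$. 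Some mechanism of this kind, converting ``certain powers of $ab$ lie in $B$'' into ``a power congruent to $1$ modulo $k-1$ lies in $B$,'' is indispensable, and your proposal contains none.

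There are also two smaller flaws. In the $(\Leftarrow)$ direction your candidate terms are not idempotent: $(x_1x_2\cdots x_m)^{k-1}$ has length $m(k-1)\equiv 0\pmod{k-1}$, so its diagonal is $a^{m(k-1)}\approx a^{k-1}$, not $a$, and the congruence $m(k-1)\equiv 1\pmod{k-1}$ you hope to ``arrange'' is impossible for $k>2$. What is needed is a term of length $\equiv 1\pmod{k-1}$ using at least two distinct variables; the paper simply takes $t(x,y)=x^{k-1}y$, after which your (correct) observation that any product containing at least one letter from $B$ lies in $B$ finishes the verification --- so this flaw is easily repaired. Separately, your parenthetical claim that idempotency forces every variable to occur in the word is false: idempotency constrains only the total length of the word, not which variables appear (e.g.\ $t(x,y)=x^k$ is idempotent). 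Your concern about projections is legitimate, and your instinct that a projection cannot absorb a \emph{proper} subalgebra is right; but your fallback that for $\mathbf B=\mathbf A$ ``the conditions hold vacuously with a suitable $k$'' is not sound, since the identity $a^k\approx a$ is a genuine constraint even when $B=A$ (consider a null semigroup, or $(\mathbb Z,+)$), so that degenerate case needs more careful wording than either your sketch or, arguably, the theorem itself provides.
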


\begin{proof}
($\Leftarrow$): Assume that the condition from the statement holds, and let us prove that $\mathbf B\trianglelefteq \mathbf A$. Choose a positive integer $k>1$ such that $a^k\approx a$ for each $a\in A$, and let $t(x,y)=x^{k-1}y$. For any $a\in A$ we have $t(a,a)=a^k\approx a$, that is, $t$ is an idempotent term. Further, for any $a\in A$, $b\in B$, we have $t(a,b)=a^{k-1}b=(a^{k-1})b\in B$ and $t(b,a)=b^{k-1}a=b(b^{k-2}a)\in B$, which proves that $t$ is an absorbing term.

($\Rightarrow$): 
Let $\mathbf B\trianglelefteq \mathbf A$, and let $t$ be an absorbing term. Since $t$ is an idempotent term, we trivially get that there exists a positive integer $k>1$ such that $a^k\approx a$ for each $a\in A$ (in particular, $k$ is the length of the term $t$). Therefore, we are left to prove that $ab\in B$ and $ba\in B$ for each $a\in A$, $b\in B$.

Let $t$ be a term in $m$ variables, which are named in such a way that the leftmost variable in $t(x_1,x_2,\dots,x_m)$ is $x_1$. Let $d_i$ denote the number of times the variable $x_i$ appears in $t(x_1,x_2,\dots,x_m)$.

Let $a\in A$ and $b\in B$ be given. We evaluate $t(ab,b^{k-1},b^{k-1},\dots,b^{k-1})$. Because $t(ab,b^{k-1},b^{k-1},\dots,b^{k-1})$ begins with $ab$ and $(ab)b^{k-1}=ab^k\approx ab$, we easily conclude
\[
t(ab,b^{k-1},b^{k-1},\dots,b^{k-1})\approx (ab)^{d_1}.
\]
Since $b^{k-1}\in B$ and $t$ is an absorbing term, by the previous equality we get
\begin{equation}\label{abe}
(ab)^{d_1}\in B.
\end{equation}

Let $r$ be any positive integer greater than $(m-1)d_1$ such that $r\equiv 1\pmod{k-1}$. Note that
\begin{equation}\label{cong}
rk\equiv 1\cdot 1=1\pmod{k-1}.
\end{equation}
For $2\leqslant i\leqslant m$, denote
\[
t_i=t((ab)^{d_1},b^{k-1},\dots,b^{k-1},(ab)^{r},b^{k-1},\dots,b^{k-1})
\]
(the expression $(ab)^{r}$ is at the $i^{\text{th}}$ coordinate and at all the coordinates denoted by ``$\dots$" we put $b^{k-1}$). Since $(ab)^{d_1}\in B$ (see (\ref{abe})), $b^{k-1}\in B$ and $t$ is an absorbing term, we get
\begin{equation}\label{tiB}
t_i\in B.
\end{equation}
As we have observed earlier, $(ab)b^{k-1}=ab^k\approx ab$, and we thus conclude that $t_i$ evaluates to a power of $ab$. In particular, since $(ab)^{d_1}$ appears $d_1$ times and $(ab)^{r}$ appears $d_i$ times, we obtain
\begin{equation}\label{tiizraz}
t_i\approx (ab)^{d_1^2+rd_i}.
\end{equation}

Consider the expression
\[
(ab)^{(r-(m-1)d_1)d_1}t_2t_3\cdots t_m.
\]
By (\ref{abe}) we get
\[
(ab)^{(r-(m-1)d_1)d_1}=((ab)^{d_1})^{r-(m-1)d_1}\in B, 
\]
which together with (\ref{tiB}) gives 
\begin{equation}\label{aaa}
(ab)^{(r-(m-1)d_1)d_1}t_2t_3\cdots t_m\in B.
\end{equation}
We further have
\[
\renewcommand{\arraystretch}{1.5}
\begin{array}{r@{}l}
(ab)^{(r-(m-1)d_1)d_1}t_2t_3\cdots t_m&{}\overset{\makebox[0pt][c]{\scriptsize(\ref{tiizraz})}}{\approx}(ab)^{(r-(m-1)d_1)d_1+\sum_{i=2}^m(d_1^2+rd_i)}\\
&{}=(ab)^{rd_1-(m-1)d_1^2+(m-1)d_1^2+\sum_{i=2}^mrd_i}\\
&{}=(ab)^{r(d_1+d_2+\cdots+d_m)}=(ab)^{rk}.
\end{array}
\]
Since $(ab)^k\approx ab$, it follows that $(ab)^{l_1}\approx (ab)^{l_2}$ whenever $l_1\equiv l_2\pmod{k-1}$. Therefore,
\[
(ab)^{(r-(m-1)d_1)d_1}t_2t_3\cdots t_m\approx (ab)^{rk}\overset{(\ref{cong})}{\approx}ab.
\]
Together with (\ref{aaa}), this gives $ab\in B$, which was to be proved. The proof that $ba\in B$ is analogous. This completes the proof of Theorem \ref{tt1}.
\end{proof}

\section{Absorption in $n$-ary semigroups}

We say that an $n$-ary operation $f:A^n\to A$ is \emph{associative} iff
\begin{equation}\label{nasoc}
\renewcommand{\arraystretch}{1.5}
\begin{array}{r@{}l}
f(f(a_1,a_2,\dots,a_n),a_{n+1},\dots,a_{2n-1})&{}=f(a_1,f(a_2,\dots,a_n,a_{n+1}),\dots,a_{2n-1})\\
&{}=\cdots\\
&{}=f(a_1,a_2,\dots,f(a_n,a_{n+1},\dots,a_{2n-1}))
\end{array}
\end{equation}
for every $a_1,a_2,\dots,a_{2n-1}\in A$. An algebra $\mathbf A=(A,f)$, where $f$ is an $n$-ary associative operation, is called an \emph{$n$-ary semigroup}.

Instead of $f(a_1,a_2,\dots,a_n)$ we shall often write $a_1a_2\cdots a_n$, instead of the expressions from (\ref{nasoc}) we shall write $a_1a_2\cdots a_{2n-1}$ etc. However, we have to keep in mind that such an expression, say $a_1a_2\cdots a_q$, is defined in $\mathbf A$ if and only if $q\equiv 1\pmod{n-1}$. On the other hand, any expression of the form $a_1a_2\cdots a_q$ (no matter whether $q\equiv 1\pmod{n-1}$ or not) will be called \emph{word}. Even if the word $w$ is not defined in $\mathbf A$, we shall still write $w^l$ for the concatenation $ww\cdots w$ (where $w$ is repeated $l$ times), but we need to be very careful not to apply any possible identities from $\mathbf A$ on such a word; for example, if $\mathbf A$ is an idempotent ternary semigroup and $a,b\in A$, then $(ab)^3a$ is a valid way to write $abababa$ (which is defined in $\mathbf A$), but we cannot deduce $(ab)^3a\approx aba$. The notation $w^l$ is defined also for $l=0$, and in that case it stands for the ``empty word", that is, $uw^0v$ means simply $uv$.

We believe that Theorem \ref{tt1} can be generalized for $n$-ary semigroups in the following way.

\begin{conj}\label{conj}
Let $\mathbf A=(A,f)$ be an $n$-ary semigroup, and let $\mathbf B\leqslant\mathbf A$. Then the following conditions are equivalent:
\begin{enumerate}
\item $\mathbf B\trianglelefteq \mathbf A$;
\item $ab^{n-1}\in B$ and $b^{n-1}a\in B$ for each $a\in A$, $b\in B$, and there exists a positive integer $k>1$ such that $a^k\approx a$ for each $a\in A$;
\item $a_1a_2\cdots a_n\in B$ whenever at least one of $a_1,a_2,\dots,a_n$ belongs to $B$, and there exists a positive integer $k>1$ such that $a^k\approx a$ for each $a\in A$. 
\end{enumerate}
\end{conj}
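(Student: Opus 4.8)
The plan is to split the cyclic equivalence into a part that should go through for \emph{every} $n$-ary semigroup and one implication that I expect to be genuinely hard. First, the soft part. The implication $(3)\Rightarrow(2)$ is immediate: specialise to the tuples $(a,b,\dots,b)$ and $(b,\dots,b,a)$, whose values are $ab^{n-1}$ and $b^{n-1}a$. For $(2)\Rightarrow(3)$, given $a_1\cdots a_n$ with some $a_j=b\in B$, I would replace $b$ by $b^{1+(n-1)N}$ for a large $N$ — legitimate because $b^{k}\approx b$ forces $b\approx b^{k^{t}}$ with $k^{t}\to\infty$ and $k^{t}\equiv1\pmod{n-1}$ — and then consume the ``bad'' letters $a_i$ one at a time, working outward from position $j$: each step fuses the outermost remaining bad letter with $n-1$ copies of the $B$-element lying next to it, using $x(b')^{n-1}\in B$ or $(b')^{n-1}x\in B$ from (2), and then re-expands that $B$-element (again via $x^{k}\approx x$) so that the next bad letter again has a long enough run of a single $B$-element beside it; when all the $a_i$ are consumed one is left with a valid word over $B$, hence an element of $B$. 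For $(2)\Rightarrow(1)$ I would reuse the witness from Theorem~\ref{tt1}: the word $t(x,y)=x^{k-1}y$ is a legitimate $n$-ary term since $k\equiv1\pmod{n-1}$, it is idempotent because $t(a,a)=a^{k}\approx a$, and one checks $t(b_1,a)=b_1^{k-1}a\in B$ from (2) and closure, and $t(a,b_2)=a^{k-1}b_2\in B$ by iterating $(3)$ on $n$-letter chunks.

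Everything then reduces to $(1)\Rightarrow(2)$, the necessity of the condition, and this is where I expect the main obstacle. From an absorbing term $t$ one gets at once, exactly as in Theorem~\ref{tt1}, that $t$ has a length $k>1$ with $k\equiv1\pmod{n-1}$ and that $a^{k}\approx a$; only $ab^{n-1}\in B$ and the symmetric $b^{n-1}a\in B$ remain. The natural attack copies the binary proof: evaluate $t$ at a tuple whose entries are suitable powers of $ab^{n-1}$ and of $b$, use the rewrites coming from $b^{k}\approx b$ to collapse the resulting word to a single power $(ab^{n-1})^{e}$, deduce $(ab^{n-1})^{e}\in B$ because $t$ is absorbing, and finish by exponent arithmetic modulo $k-1$. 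The difficulty is precisely the collapse: in a general $n$-ary semigroup the word obtained by substituting into $t$ cannot be freely regrouped, nor rearranged, into such a clean power, because non-commutativity together with the rigid ``length $\equiv1\pmod{n-1}$'' constraint destroys the cancellations that made the binary argument run — which is exactly why the statement is offered only as a conjecture.

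The three hypotheses each remove this obstruction in a different way, and I would handle them separately. If $f$ is commutative, words are determined by their multiset of letters, so the collapse to $(ab^{n-1})^{e}$ reduces to counting occurrences of the variables of $t$ exactly as in Theorem~\ref{tt1}, the only new point being bookkeeping to keep every intermediate word of length $\equiv1\pmod{n-1}$, which is arranged by padding with extra runs of $b$. If $|A\setminus B|=1$, say $A\setminus B=\{c\}$, then (2) amounts to $cb^{n-1}\neq c$ and $b^{n-1}c\neq c$ for all $b\in B$ (the cases with the free argument already in $B$ being automatic), and I would assume one of these fails and analyse what $t$ can do to a tuple with a single $c$ and the rest in $B$, aiming to show that the whole evaluation then collapses to $c\notin B$, contradicting that $t$ is absorbing. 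If $\mathbf A$ is an idempotent ternary semigroup, then $a^{k}\approx a$ is automatic and (2) reads $ab^{2}\in B$, $b^{2}a\in B$; here the identity $f(x,x,x)\approx x$ lets $ab^{2}$ swallow pairs of $b$'s on its right and $b^{2}a$ swallow them on its left, which is just enough rewriting power to reshape the evaluation of $t$ on a tuple built from $ab^{2}$ and $b$ into a controlled expression and force $ab^{2}\in B$ — I expect this to be the most computational of the three, needing a careful case analysis of words over $\{ab^{2},b\}$ modulo the ternary idempotent laws. In each case the symmetric statement goes the same way, and the fully general implication $(1)\Rightarrow(2)$ I would leave open.
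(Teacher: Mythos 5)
Your treatment of the reversible implications is correct and essentially matches the paper. $(3)\Rightarrow(2)$ is immediate as you say; your $(2)\Rightarrow(3)$, pumping $b$ to $b^{1+m(k-1)}$ and absorbing the remaining letters outward from position $j$, is a sound variant of the paper's one-shot splitting $a_1\cdots a_i^{2k-1}\cdots a_n=(a_1\cdots a_i^{n-i+1})\,a_i^{2k-n-2}\,(a_i^i\cdots a_n)$; and closing the cycle into $(1)$ by the grouping argument works (the paper uses any two-variable term of length $k$). You also correctly identify $(1)\Rightarrow(2)$ as the genuinely hard direction, explain why the binary collapse fails for general $n$-ary words, and leave the general case open --- exactly the paper's position, since the statement is a conjecture proved only in the three listed special cases.

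The gap is that those three special cases \emph{are} the paper's contribution, and your proposal only plans them; in two of the three the plan does not contain the idea that makes the argument work. The commutative case is the closest: the paper does not rerun the exponent arithmetic of Theorem \ref{tt1} but concatenates all $m$ absorbing evaluations $t(a,b,\dots,b)t(b,a,\dots,b)\cdots t(b,\dots,b,a)$ and pads with $b^{l}$ where $m+l\equiv n\pmod{k-1}$, commuting the product directly into $b^{n-1}a$; your version should also be made to work with similar padding. For $A\setminus B=\{c\}$, your stated aim (assume $b^{n-1}c\approx c$ and show a single evaluation of $t$ collapses to $c$) is not how the proof goes and is not obviously achievable: the paper first establishes the auxiliary facts $c^{k-1}b\in B$ and $bc^{k-1}\in B$ (this is where the ``collapse to $c$'' trick is used, with $c$ substituted for the leftmost variable), then derives $b^{n-2}cb\approx c$ from the negated hypothesis, and uses these two identities to commute $cb$ into $bc$ inside the concatenated word $t_1(c,b)$, reducing to the commutative computation and contradicting $b^{n-1}c\approx c$. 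For the idempotent ternary case your sketch has none of the substance: the paper's proof is a nine-step argument whose engine is the \emph{difference} invariant of a word over $\{a,b\}$ (the discrepancy between occurrences of $a$ at odd and at even positions), the observation that the differences $|o_i-e_i|$ attached to the $m$ variables of $t$ are coprime because their signed sum is $1$, and a descent from $b(ab)^{l}\in B$ and $b(ab)^{l+1}\in B$ down to $bab\in B$. ``A careful case analysis of words over $\{ab^{2},b\}$'' neither predicts nor replaces this invariant, so the ternary case --- and hence the bulk of what the paper actually proves about Conjecture \ref{conj} --- remains unestablished in your proposal.
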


\noindent We say that an $n$-ary operation $f$ is \emph{commutative} iff
\[
f(a_1,a_2,\dots,a_n)=f(a_{\pi(1)},a_{\pi(2)},\dots,a_{\pi(n)})
\]
for any $a_1,a_2,\dots,a_n$ and any permutation $\pi$ of the set $\{1,2,\dots,n\}$. We now prove Conjecture \ref{conj} in the case when $f$ is commutative, and then in two more cases, namely when $|A\setminus B|=1$ and when $\mathbf A$ is an idempotent ternary semigroup.

\begin{thm}\label{t1}
Conjecture \ref{conj} holds when $f$ is commutative.
\end{thm}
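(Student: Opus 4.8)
The plan is to establish the cycle $(1)\Rightarrow(2)\Rightarrow(3)\Rightarrow(1)$, following the strategy of the proof of Theorem~\ref{tt1} but manipulating \emph{words} of length $\equiv 1\pmod{n-1}$ in place of products, and using commutativity to permute the letters of such words freely. I will use freely the following standard facts about $n$-ary semigroups: generalized associativity (any two ways of evaluating a fixed word of valid length give the same element of $\mathbf A$); for commutative $f$, the value of such a word depends only on its multiset of letters (seen by realizing an arbitrary adjacent transposition of letters inside a length-$n$ block, since adjacent transpositions generate the symmetric group); the fact that a word of valid length whose letters all lie in $B$ evaluates into $B$ (because $\mathbf B$ is a subalgebra); and the ``folding'' move supplied by $a^k\approx a$, namely that replacing any $k$ consecutive equal letters of a word by a single copy of that letter preserves both the value and the length modulo $n-1$, lowering the corresponding exponent by $k-1$. (As usual the hypothesis $a^k\approx a$ presupposes $k\equiv1\pmod{n-1}$.)

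Two of the three implications are short. For $(3)\Rightarrow(1)$ — which does not even use commutativity — take $t(x,y)=x^{k-1}y$; this is a term since $k\equiv1\pmod{n-1}$, and it is idempotent because $t(a,a)=a^k\approx a$. To see it absorbs, peel off the rightmost $n$ letters of $t(a,b)=a^{k-1}b$: by (3) the block $a^{n-1}b$ evaluates into $B$, so we may replace it by a single $B$-letter, obtaining a shorter word of valid length one of whose letters lies in $B$; iterating reaches an element of $B$, and $t(b,a)=b^{k-1}a$ is handled the same way, peeling $b^{n-1}a$. For $(2)\Rightarrow(3)$ — which is where commutativity is needed — suppose $a_1\cdots a_n\in A$ with some $a_i\in B$; permuting, we may assume it equals $b\,a_2\cdots a_n$ with $b\in B$. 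Then $(b\,a_2\cdots a_n)^k\approx b\,a_2\cdots a_n$, and permuting the length-$nk$ word $(b\,a_2\cdots a_n)^k$ so that it begins with $b^{n-1}$ followed by a word of valid length, we get $b\,a_2\cdots a_n\approx b^{n-1}\bar v$ for some $\bar v\in A$, which lies in $B$ by (2). Since $(3)\Rightarrow(2)$ is immediate and the power identity in (2) is inherited from the idempotency of any absorbing term, everything reduces to $(1)\Rightarrow(2)$, which is the main obstacle.

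For $(1)\Rightarrow(2)$, let $t$ be an absorbing term, $k$ its length (as in Theorem~\ref{tt1} we may take $k>1$), and $d_1,\dots,d_m$ the numbers of occurrences of its variables, so $\sum d_i=k$; if some $d_i=k$ then placing $a$ in that slot forces $a^k\approx a\in B$, i.e.\ $A=B$ and the statement is trivial, so assume $1\le d_i\le k-1$ for all $i$. Placing $a\in A$ in the $i$-th slot and $b\in B$ in all others and using commutativity, $t$ evaluates to the value of $a^{d_i}b^{k-d_i}$, so $a^{d_i}b^{k-d_i}\in B$ for all $a,b$ and all $i$. Now concatenating two such elements together with enough extra copies of $b$ to make the number of factors $\equiv1\pmod{n-1}$, and permuting, shows that if $a^{p}b^{q}\in B$ and $a^{p'}b^{q'}\in B$ (for all $a,b$) then $a^{p+p'}b^{q''}\in B$ for a suitable $q''$; hence the set $S$ of exponents $p\ge1$ admitting some $q\ge0$ with $p+q\equiv1\pmod{n-1}$ and $a^{p}b^{q}\in B$ (for all $a,b$) is closed under addition and contains $d_1,\dots,d_m$, and, by folding $a$'s, $p\in S$ implies that the least positive integer congruent to $p$ modulo $k-1$ belongs to $S$. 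Since $g:=\gcd(d_1,\dots,d_m)$ divides $\sum d_i=k$ it is coprime to $k-1$, so some sufficiently large multiple $cg$ of $g$ lies in the numerical semigroup generated by $d_1,\dots,d_m$ (hence in $S$) and satisfies $cg\equiv1\pmod{k-1}$; folding $a$'s then gives $1\in S$. Thus $a\,b^{q_1}\in B$ for all $a,b$ with $q_1\equiv0\pmod{n-1}$, and if $q_1=0$ then $A=B$; otherwise a final elementary computation with the $b$-exponents (raising the exponent by $n-1$ at a time to move it into the residue class of $n-1$ modulo $k-1$, then folding $b$'s back down) yields $a b^{n-1}\in B$, whence also $b^{n-1}a=a b^{n-1}\in B$ by commutativity. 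The heart of the difficulty is exactly the arithmetic bookkeeping imposed by the constraint that only words of length $\equiv1\pmod{n-1}$ are defined, which defeats the naive ``concatenate two $B$-elements'' argument and is what the subgroup/gcd computation is designed to circumvent.
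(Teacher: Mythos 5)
Your proof is correct, but the key implication $(1)\Rightarrow(2)$ is handled by a genuinely different mechanism than the paper's. Both arguments start the same way: commutativity reduces the one-slot evaluations of the absorbing term $t$ to elements $a^{d_i}b^{k-d_i}\in B$. The paper then simply multiplies all $m$ of these evaluations together, padded by $b^l$ with $m+l\equiv n\pmod{k-1}$: the resulting product lies in $B$, contains exactly $\sum_i d_i=k\equiv 1\pmod{k-1}$ occurrences of $a$, and therefore collapses in one step to $b^{n-1}a$. You instead run a numerical-semigroup argument on the set $S$ of achievable $a$-exponents ($S$ contains the $d_i$, is closed under addition and under reduction modulo $k-1$, and $\gcd(d_1,\dots,d_m)$ divides $k$ and so is coprime to $k-1$, whence $1\in S$), followed by a separate adjustment of the $b$-exponent. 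Both routes turn on the same arithmetic fact---that $\sum_i d_i=k$ is coprime to $k-1$---but the paper exploits it via a direct one-shot construction, whereas you use it through an indirect existence (Frobenius-type) argument; the extra machinery buys nothing here, though every step of it checks out, and it would adapt more readily to a situation where one could not control all the multiplicities $d_i$ simultaneously. One further difference worth noting: the paper proves $(2)\Rightarrow(3)$ and $(3)\Rightarrow(1)$ without using commutativity and explicitly records this so that those implications can be reused verbatim in the subsequent theorems; your $(2)\Rightarrow(3)$ invokes commutativity (to pull $b^{n-1}$ to the front of $(ba_2\cdots a_n)^k$), which is harmless for this theorem but forfeits that reusability.
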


\begin{proof}
The implications $\text{(2)}\Rightarrow\text{(3)}$ and $\text{(3)}\Rightarrow\text{(1)}$ are easy, and in fact we shall not use the commutativity of $f$ in their proofs. 

$\text{(2)}\Rightarrow\text{(3)}$: Let $a_1,a_2,\dots,a_n\in A$ be given, and let $a_i\in B$ for some $i$. Then
\[
a_1a_2\cdots a_i\cdots a_n\approx a_1a_2\cdots a_i^{2k-1}\cdots a_n=(a_1a_2\cdots a_i^{n-i+1}) a_i^{2k-n-2} (a_i^i\cdots a_n),
\]
which implies that it is enough to prove $a_1a_2\cdots a_i^{n-i+1}\in B$ and $a_i^i\cdots a_n\in B$. And indeed:
\[
a_1a_2\cdots a_i^{n-i+1}\approx a_1a_2\cdots a_i^{n-i+k}=(a_1a_2\cdots a_i^{k-i+1})a_i^{n-1}\in B
\]
and
\[
a_i^i\cdots a_n\approx a_i^{i+k-1}\cdots a_n=a_i^{n-1}(a_i^{i+k-n}\cdots a_n)\in B
\]
by the assumption.

$\text{(3)}\Rightarrow\text{(1)}$: Let $t(x,y)$ be any term of length $k$ containing at least one occurrence of each variable $x$ and $y$. Then $t$ is an absorbing term.

That leaves only the implication $\text{(1)}\Rightarrow\text{(2)}$. We also note that, since we have just shown that the previous two implications always hold, in the later theorems we prove only the implication $\text{(1)}\Rightarrow\text{(2)}$.

$\text{(1)}\Rightarrow\text{(2)}$: Let $\mathbf B\trianglelefteq \mathbf A$, and let $t(x_1,x_2,\dots,x_m)$ be an absorbing term. Let $k$ be the length of $t$. Then $a^k\approx a$ for each $a\in A$, and furthermore, $a^{l_1}\approx a^{l_2}$ whenever $l_1\equiv l_2\pmod{k-1}$.

By the commutativity of $f$, we may write $t(x_1,x_2,\dots,x_m)$ in the form $x_1^{k_1}x_2^{k_2}\cdots x_m^{k_m}$, where $k_1+k_2+\cdots+k_m=k$. Let us show that $b^{n-1}a\in B$ for each $a\in A$, $b\in B$.

Let $a\in A$, $b\in B$. Let
\[
t_1(a,b)=t(a,b,b,\dots,b) t(b,a,b,\dots,b) t(b,b,a,\dots,b)\cdots t(b,b,b,\dots,a) b^{l},
\]
where $l$ is chosen so that
\begin{equation}\label{KplusL}
m+l\equiv n\pmod{k-1}.
\end{equation}
The length of $t_1$ equals $mk+l$. Since $mk+l\equiv m+l\equiv n\pmod {k-1}$ and $n-1\mid k-1$, we get $mk+l\equiv n\equiv 1\pmod{n-1}$, that is, $t_1$ is well-defined. Further, since $t$ is an absorbing term, we have $t(a,b,b,\dots,b)\in B$, $t(b,a,b,\dots,b)\in B$, $\dots$, $t(b,b,b,\dots,a)\in B$, which gives $t_1(a,b)\in B$. Finally, note that
\begin{equation}\label{zaposle}
\renewcommand{\arraystretch}{1.5}
\begin{array}{r@{}l}
t_1(a,b)&{}\approx a^{k_1}b^{k-k_1}a^{k_2}b^{k-k_2}\cdots a^{k_m}b^{k-k_m}b^{l}\approx b^{mk-(k_1+k_2+\cdots+k_m)+l}a^{k_1+k_2+\cdots+k_m}\\
&{}\approx b^{(m-1)k+l}a^k\approx b^{m-1+l}a\overset{(\ref{KplusL})}{\approx} b^{n-1}a,
\end{array}
\end{equation}
which proves that $b^{n-1}a\in B$. The proof of $ab^{n-1}\in B$ is analogous.
\end{proof}

\begin{thm}
Conjecture \ref{conj} holds when $|A\setminus B|=1$.
\end{thm}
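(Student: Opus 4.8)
By the implications established in full generality during the proof of Theorem~\ref{t1}, namely $(2)\Rightarrow(3)$ and $(3)\Rightarrow(1)$, together with the fact that the existence of an idempotent term forces some $k>1$ with $a^k\approx a$ (exactly as in Theorem~\ref{tt1}), it suffices to prove $(1)\Rightarrow(2)$. Furthermore, passing from $f$ to $f^{\mathrm{op}}(a_1,\dots,a_n):=f(a_n,a_{n-1},\dots,a_1)$ yields an $n$-ary semigroup with the same universe and the same subuniverse $B$, preserves $\mathbf B\trianglelefteq\mathbf A$ (reverse the absorbing term), and interchanges the roles of $ab^{n-1}$ and $b^{n-1}a$; hence it is enough to show $ab^{n-1}\in B$ for all $a\in A$, $b\in B$. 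For $a\in B$ this holds because $B$ is a subuniverse, so --- writing $A\setminus B=\{c\}$ for the unique element outside $B$ --- the whole problem reduces to showing $cb^{n-1}\neq c$ for every $b\in B$.

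Suppose for contradiction that $cb^{n-1}=c$ for some fixed $b\in B$. Iterating, $cb^{j(n-1)}=c$ for all $j\ge 0$; equivalently, for any word $v$ over the single letter $b$ whose length is a multiple of $n-1$, the element $cv$ (which is well defined) equals $c$. Now fix an idempotent absorbing term $t(x_1,\dots,x_m)$, let $k$ be its length (so $m\ge 2$ and $k>1$), and, after renaming, assume $x_1$ is the leftmost variable of $t$; write $t=x_1R_0x_1R_1\cdots x_1R_{d_1-1}$, where $d_1$ is the number of occurrences of $x_1$ and each $R_\ell$ is a (possibly empty) word in $x_2,\dots,x_m$. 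Then
\[
w:=t(c,b,\dots,b)=cb^{g_0}cb^{g_1}\cdots cb^{g_{d_1-1}},\qquad g_\ell:=|R_\ell|,\quad\textstyle\sum_{\ell}g_\ell=k-d_1,
\]
and $w\in B$ by absorption. The first key observation is that if $g_\ell\equiv 0\pmod{n-1}$ for every $\ell$, then $w=c$: the initial segment $cb^{g_0}$ has length $\equiv 1\pmod{n-1}$, hence by associativity it occurs as a subexpression of $w$, and it equals $c$; replacing it by $c$ shortens $w$ to $cb^{g_1}\cdots cb^{g_{d_1-1}}$, and repeating this $d_1$ times collapses $w$ to $c$, contradicting $w\in B$. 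In particular this settles the case $d_1=1$, where $g_0=k-1\equiv 0\pmod{n-1}$.

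It remains to treat the general situation, in which some gaps $g_\ell$ are ``bad'', i.e.\ $g_\ell\not\equiv 0\pmod{n-1}$. The plan is to exploit the absorption conditions of $t$ at the remaining variable positions, and the freedom to substitute arbitrary elements of $B$ for the variables other than $x_1$ (so that the blocks separating the occurrences of $c$ become words over a larger, freely chosen alphabet of $B$-elements), in order to again produce an evaluation of $t$ with a single argument outside $B$ whose value is forced to be $c$. The step I expect to be the main obstacle is precisely the handling of the bad gaps: the identity $cb^{n-1}=c$ only lets one cancel a block of $b$'s of length divisible by $n-1$ sitting immediately after an occurrence of $c$, and no admissible manipulation can alter the residues $g_\ell\bmod(n-1)$ --- substituting a valid term for a variable, or inserting admissible padding, always contributes blocks of length $\equiv 1$ (respectively $\equiv 0$) modulo $n-1$, so these residues are invariant --- while the hypothesis gives no direct information about products such as $cb^{g}c$ with $g\not\equiv 0\pmod{n-1}$, or about $c^{2}$, which for $n\ge 3$ is not even a well-defined element. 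This is where the assumption $|A\setminus B|=1$ must be used in an essential way: every proper subexpression of $w$ is a word over $\{b,c\}$ whose value necessarily lies in $B\cup\{c\}$, and one has to use associativity to re-bracket $w$ so that, by a case analysis on whether such subexpressions land in $B$ or equal $c$, the absorption conditions together with $a^k\approx a$ leave $w=c$ as the only possibility. Once $cb^{n-1}\in B$ is proved, applying the same result to $f^{\mathrm{op}}$ yields $b^{n-1}c\in B$, which completes condition~(2).
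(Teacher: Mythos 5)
There is a genuine gap, and you acknowledge it yourself: after the reductions and the case $d_1=1$, you describe the remaining ``bad gaps'' as ``the main obstacle'' and only sketch a plan (``one has to use associativity to re-bracket $w$ so that, by a case analysis\dots, $w=c$ is the only possibility'') without carrying it out. That remaining case is the entire content of the theorem, so the proposal is not a proof. A secondary flaw: your ``first key observation'' is overstated. When every $g_\ell\equiv 0\pmod{n-1}$, collapsing each block $cb^{g_\ell}$ to $c$ turns $w$ into $c^{d_1}$, not into $c$; since you only know $c^k\approx c$, the power $c^{d_1}$ need not equal $c$ for $1<d_1<k$ (it could perfectly well lie in $B$). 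The observation is sound only in the case $d_1=1$ you actually invoke it for, where $g_0=k-1\equiv 0\pmod{n-1}$ gives $w=cb^{k-1}\approx c$ directly.

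The missing idea in the paper is not a refinement of your gap analysis but a different mechanism: one first proves, by a separate contradiction argument (substitute $c$ for the \emph{leftmost} variable of $t$ and $b$ elsewhere, and note that if $c^{k-1}b\approx c$ then every factor $cb$ in the resulting word may be rewritten as $cc$, collapsing the whole word to $c^k\approx c$, against absorption), that $c^{k-1}b\in B$ and $bc^{k-1}\in B$. Then, assuming $b^{n-1}c\approx c$, the factorization $c\approx c\cdot c\cdot c^{k-2}\approx(b^{n-1}c)(b^{n-1}c)c^{k-2}=b\,(b^{n-2}cb)\,b^{n-3}\,(bc^{k-1})$ is an $n$-fold product whose value is $c\notin B$ while $b$ and $bc^{k-1}$ lie in $B$; hence $b^{n-2}cb\approx c$. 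Combining the two relations yields the commutation $cb\approx(b^{n-1}c)b=b(b^{n-2}cb)\approx bc$, after which the computation from the commutative case (the product $t_1(c,b)$ of all one-$c$ evaluations of $t$, padded by $b^l$) evaluates to $b^{n-1}c$ and lies in $B$, contradicting the assumption. Your framework of residues $g_\ell\bmod(n-1)$ never produces this commutation relation, which is what actually neutralizes the ``bad gaps.''
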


\begin{proof}
$\text{(1)}\Rightarrow\text{(2)}$: Let $\mathbf B\trianglelefteq \mathbf A$, and let $t(x_1,x_2,\dots,x_m)$ be an absorbing term. Let $k$ be the length of $t$. By the idempotence of $t$ it follows that $a^k\approx a$ and in fact that $a^{l_1}\approx a^{l_2}$ whenever $a\in A$ and $l_1$ and $l_2$ are positive integers such that $l_1\equiv l_2\pmod{k-1}$.

Let $A\setminus B=\{c\}$. All we have to prove is that $b^{n-1}c\not\approx c$ and $cb^{n-1}\not\approx c$ for each $b\in B$. In the first place, we shall prove that $c^{k-1}b\in B$ and $bc^{k-1}\in B$ for each $b\in B$, which shall be needed later.

Aiming for a contradiction, suppose first that $c^{k-1}b\approx c$ for some $b\in B$. Let $x_i$ be the leftmost variable in $t(x_1,x_2,\dots,x_m)$. Putting $c$ at the $i^{\text{th}}$ coordinate and $b$ at all the other ones gives
\[
t(b,\dots,b,c,b,\dots,b)=cc\cdots ccbb\cdots bbcc\cdots ccbb\cdots bb\cdots.
\]
Let us show that each occurrence of $cb$ in the above expression can be replaced by $cc$ without affecting the value of $t(b,\dots,b,c,b,\dots,b)$. And indeed, we have
\[
\cdots cb\cdots\approx\cdots c^kb\cdots=\cdots cc^{k-1}b\cdots\approx\cdots cc\cdots,
\]
which proves the claim. By iterating this process we ultimately get
\[
t(b,\dots,b,c,b,\dots,b)\approx c^k\approx c;
\]
however, since $t$ is an absorbing term, $t(b,\dots,b,c,b,\dots,b)\in B$ should hold, a contradiction. This proves that $c^{k-1}b\not\approx c$. In an analogous way we obtain that $bc^{k-1}\not\approx c$.

Again aiming for a contradiciton, suppose now that
\begin{equation}\label{aaa1}
b^{n-1}c\approx c
\end{equation}
for some $b\in B$. It now follows that
\[
c\approx c^k=ccc^{k-2}\overset{(\ref{aaa1})}{\approx} (b^{n-1}c)(b^{n-1}c)c^{k-2}=b(b^{n-2}cb)b^{n-3}(bc^{k-1}).
\]
Since $b\in B$ and $bc^{k-1}\in B$, it is impossible that $b^{n-2}cb\in B$, since then the value at the right-hand side would belong to $B$, while the value at the left-hand side is $c$. In other words,
\begin{equation}\label{aaa2}
b^{n-2}cb\approx c.
\end{equation}
Now, let
\[
t_1(c,b)=t(c,b,b,\dots,b) t(b,c,b,\dots,b) t(b,b,c,\dots,b)\cdots t(b,b,b,\dots,c) b^{l},
\]
where $l$ is chosen so that $m+l\equiv n\pmod {k-1}$. In exactly the same way as in the proof of Theorem \ref{t1}, we see that $t_1$ is well-defined and that $t_1(c,b)\in B$. Furthermore, we note that each occurrence of $cb$ in $t_1(c,b)$ can be replaced by $bc$ without affecting the value of $t_1(c,b)$; indeed:
\[
\cdots cb\cdots\overset{(\ref{aaa1})}{\approx} \cdots (b^{n-1}c)b\cdots= \cdots b(b^{n-2}cb)\cdots\overset{(\ref{aaa2})}{\approx}\cdots bc\cdots.
\]
This enables us to further mimic the proof of Theorem \ref{t1} (in particular, the lines (\ref{zaposle})), thus obtaining $t_1(c,b)\approx b^{n-1}c$ and hence $b^{n-1}c\in B$. However, this is exactly the opposite of the supposition (\ref{aaa1}). This condradiction proves $b^{n-1}c\in B$. The proof of $cb^{n-1}\in B$ is analogous.
\end{proof}

\begin{thm}\label{tern}
Conjecture \ref{conj} holds when $\mathbf A$ is an idempotent ternary semigroup.
\end{thm}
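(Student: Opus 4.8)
We only need the implication $\text{(1)}\Rightarrow\text{(2)}$, since the other two implications of Conjecture~\ref{conj} were established, in general, inside the proof of Theorem~\ref{t1}. As $\mathbf A$ is an idempotent ternary semigroup we have $a^3\approx a$, hence $a^l\approx a$ for every odd $l$; so the ``there exists $k>1$'' part of (2) comes for free, and only $ab^2\in B$ and $b^2a\in B$ (for all $a\in A$, $b\in B$) remain to be proved. Moreover, the reversed algebra $\mathbf A^{\mathrm{op}}=(A,f^{\mathrm{op}})$ with $f^{\mathrm{op}}(x,y,z)=f(z,y,x)$ is again an idempotent ternary semigroup, $\mathbf B$ is again a subalgebra, and the backwards-written absorbing term is absorbing for $\mathbf A^{\mathrm{op}}$, so $\mathbf B\trianglelefteq\mathbf A^{\mathrm{op}}$; since $b^2a$ evaluated in $\mathbf A^{\mathrm{op}}$ is $ab^2$ evaluated in $\mathbf A$, it suffices to prove the single statement $b^2a\in B$ universally over all idempotent ternary semigroups, and then apply it also to $\mathbf A^{\mathrm{op}}$.

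Fix therefore an absorbing term $t(x_1,\dots,x_m)$ of length $k$; since $k\equiv1\pmod 2$ the number $k$ is odd (and we may assume $m\geqslant 2$, as $m=1$ forces $A=B$, in which case (2) is trivial). Two facts will be used throughout. First, the \emph{collapsing rule} following from $b^3\approx b$: within any word naming a defined element, a maximal run of $r$ equal letters may be replaced by a single copy of that letter if $r$ is odd and by two copies if $r$ is even (each step just rewrites an occurrence of $bbb$ as $b$, which preserves well-definedness since it removes two letters). Second, absorption plus closure: $t(b,\dots,a,\dots,b)\in B$ whenever one variable of $t$ receives an arbitrary element of $A$ and the rest receive elements of $B$; $B$ is closed under $f$; and already-derived elements of $B$ may be reused in the first two items.

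The model to imitate is the computation in the commutative case: a product of the elements $t(b,\dots,a,\dots,b)$ with a suitable block of $b$'s lies in $B$ and, after collapsing, should equal $b^2a$. Without commutativity the copies of $a$ produced by the substitutions sit at positions dictated by $t$ rather than together, so I would proceed in two stages. In the first stage I derive an intermediate membership such as $b^2ab^2\in B$: for instance $f\bigl(b,\,t(b,\dots,a,\dots,b),\,b\bigr)$ lies in $B$, and when the chosen variable of $t$ is suitably isolated (if necessary after first replacing $t$ by a composite of $t$ with itself, to make an occurrence of that variable unique and of the right parity of position) the collapsing rule reduces this element to $b^2ab^2$. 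In the second stage I extract $b^2a$ from $b^2ab^2$: plug it, together with the elements already found, back into the absorbing term at appropriate slots, pad by a block of $b$'s whose length is chosen modulo $k-1$ so that every word involved is well defined and the runs created at the junctions have the required parities, and then verify by an exponent bookkeeping --- exactly as in the proof of Theorem~\ref{tt1}, tracking the number of $a$'s modulo $2$ and the padding modulo $k-1$ --- that the fully collapsed product is $b^2a$.

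The main obstacle is this extraction step: keeping tight enough control on the shape of the collapsed product to be sure it equals $b^2a$ rather than some other, longer element of the subalgebra $\langle a,b\rangle$, which for an idempotent ternary semigroup can be infinite (already the free one on two generators is). I expect this to require a case analysis on the occurrence pattern of the ``active'' variable in $t$ --- in particular whether two of its occurrences can be adjacent, and whether it occurs at an end of $t$ --- with the inconvenient patterns removed beforehand by composing $t$ with itself. My expectation is that it is precisely this bookkeeping, rather than any single slick identity, that constitutes the bulk of the argument.
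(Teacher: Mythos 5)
There is a genuine gap: what you have written is a plan, not a proof, and the plan's central step does not go through. Your stage 1 asks to reduce $f\bigl(b,\,t(b,\dots,a,\dots,b),\,b\bigr)$ to $b^2ab^2$ by first arranging that the ``active'' variable occurs exactly once in the term. But composing $t$ with itself (or with any absorbing term) can only multiply, never reduce, the number of occurrences of a variable, so there is in general no absorbing term in which some variable occurs exactly once; and if there were, say $s(y,\vec x)=y\,w(\vec x)$, then $s(a,b,\dots,b)=ab^{k-1}\approx ab^2$ would finish the whole theorem in one line, which is a strong sign that this cannot be forced. With $d_i\geqslant 2$ occurrences of the active variable, the collapsed word is of the shape $b^{\varepsilon_0}ab^{\varepsilon_1}a\cdots ab^{\varepsilon_{d_i}}$ with several $a$'s whose relative positions are dictated by $t$, and getting from such words down to $bab$ (equivalently $b^2ab^2$) is precisely the entire content of the theorem --- you have deferred all of the difficulty into the step you describe as ``the main obstacle'' and then not carried it out.

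The paper's actual argument shows what is needed and why ``tracking the number of $a$'s modulo $2$'' is not the right invariant. The key quantity attached to a word in $a,b$ is the integer $|o-e|$, where $o$ and $e$ count occurrences of $a$ at odd and even positions; a word-rewriting procedure (using auxiliary memberships such as $abbab\in B$, $aab\in B$, $aabaa\in B$, each requiring its own derivation) shows that from any absorbing evaluation one can extract $b(ab)^{l}\in B$ where $l$ is that difference. Then a genuinely arithmetic step enters: the differences $|o_i-e_i|$ over all variables sum (with signs) to $\lceil k/2\rceil-\lfloor k/2\rfloor=1$, hence are coprime, hence the numerical semigroup they generate contains two consecutive integers, giving $b(ab)^{l}\in B$ and $b(ab)^{l+1}\in B$ simultaneously; a further descent argument then forces $bab\in B$, from which $ab^2,b^2a\in B$ follow. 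None of these ingredients --- the integer (not mod-$2$) difference invariant, the coprimality argument, the descent --- appear in your proposal, and the exponent bookkeeping ``as in Theorem \ref{tt1}'' does not transfer, because in the noncommutative ternary setting the positions of the $a$'s, not just their number modulo $k-1$, determine the value. Your correct preliminary observations (reduction to $(1)\Rightarrow(2)$, the opposite-algebra symmetry, the collapsing rule $b^3\approx b$) match the paper's setup but constitute only its first few lines.
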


\begin{proof}
$\text{(1)}\Rightarrow\text{(2)}$: Let $\mathbf B\trianglelefteq \mathbf A$, and let $t(x_1,x_2,\dots,x_m)$ be an absorbing term, where the variables are named in such a way that the leftmost variable is $x_1$. Let $k$ be the length of $t$.

We need to prove that $ab^2\in B$ and $b^2a\in B$ for any $a\in A$, $b\in B$. The proof proceeds in nine steps:
\begin{enumerate}
\item We show that whenever $u^2vu\in B$, where $u$ is any word of an odd length and $v$ of an even length, then $vu\in B$. Analogously, we also obtain that, whenever $uvu^2\in B$, then $uv\in B$.
\item We show that whenever $ub\in B$, where $u$ is any word of an even length and $b\in B$, then $bu\in B$, and vice versa.
\item We show that $abbab\in B$ and $babba\in B$ for any $a\in A$, $b\in B$.
\item We show that $aab\in B$, $baa\in B$ and $aabaa\in B$ for any $a\in A$, $b\in B$.
\item We show that, whenever $t'(x,y)$ is a term such that $t'(a,b)\in B$ for all $a\in A$, $b\in B$, then $b(ab)^l\in B$, where $l$ is the absolute value of the difference of the number of occurrences of the letter $a$ at the odd, respectively even positions in the word $t'(a,b)$.
\item We show that, whenever $b(ab)^l\in B$ for an integer $l>1$ and some $a\in A$, $b\in B$, then $(ab)^{l-1}a\in B$.
\item We show that there exists a positive integer $l$ such that $b(ab)^l\in B$ and $b(ab)^{l+1}\in B$ for any $a\in A$, $b\in B$.
\item We show that $bab\in B$ for any $a\in A$, $b\in B$.
\item We show that $ab^2\in B$ and $b^2a\in B$ for any $a\in A$, $b\in B$.
\end{enumerate}\medskip

\noindent We now prove these steps.

\begin{enumerate}
\item Since $u^2vu\in B$, we obtain $t(vu,u^2vu,u^2vu,\dots,u^2vu)\in B$. Note that
\[
(vu)(u^2vu)\cdots=vu^3vu\cdots\approx vuvu\cdots=(vu)(vu)\cdots.
\]
Therefore,
\[
t(vu,u^2vu,u^2vu,\dots,u^2vu)\approx (vu)^k\approx vu,
\]
which gives $vu\in B$. The proof that $uvu^2\in B$ implies $uv\in B$ is analogous.
\item If $ub\in B$, then $bub^2=b(ub)b\in B$, and now $bu\in B$ by the previous step. The other direction is analogous.
\item Let $a\in A$, $b\in B$ be given. Denote
\[
a'=abbab. 
\]
We have $t(a'bb,b,b,\dots,b)\in B$. Note that $t(a'bb,b,b,\dots,b)$ is a word that starts with $a'$, ends with $b$ and has no occurrences of two letters $a'$ next to each other; furthermore, since $b^3\approx b$, this word reduces to a word that has either $b$ or $bb$ between each two successive occurrences of $a'$. We note
\[
a'ba'=(abbab)b(abbab)=(abb)^3ab\approx abbab=a',
\]
that is, each occurrence of $a'ba'$ in the considered word reduces to $a'$. It follows that $t(a'bb,b,b,\dots,b)$ reduces to a word $a'bba'bba'bba'\cdots b$, that is, either to $(a'bb)^l$ or to $(a'bb)^la'b$ for some odd positive integer $l$ ($l$ has to be odd for these products to be defined). These words further reduce to $a'bb$ and $a'bba'b$, respectively. Since
\[
a'bb=(abbab)bb=abbab^3\approx abbab
\]
and
\[
a'bba'b=(abbab)bb(abbab)b=abbab^3abbabb\approx abbababbabb,
\]
we obtain $abbab\in B$ or $abbababbabb\in B$. In the first case, this is what was to be proved. In the second case, since $(abb)ab(abb)^2=abbababbabb\in B$, by step 1 we again obtain $abbab\in B$. The proof that $babba\in B$ is analogous (or, alternatively, follows from $abbab\in B$ and step 2).

\item Let $a\in A$, $b\in B$ be given. By the previous step, we have
\[
aabaabb\approx aab^3aabb=(aab)bb(aab)b\in B.
\]
In an analogous way, we obtain
\[
bbaabaa\in B.
\]
From these two conclusion we get
\[
\renewcommand{\arraystretch}{1.5}
\begin{array}{r@{}l}
bbaab&{}\approx bb(aab)^3=bbaabaabaab\approx bbaaba^4baab^3\\
&{}=(bbaabaa)(aabaabb)b\in B.
\end{array}
\]
Now, since $bbaab\in B$, step 1 gives $aab\in B$. The proof that $baa\in B$ is analogous (or, alternatively, follows from $aab\in B$ and step 2). Finally, $aabaa\approx aab^3aa=(aab)b(baa)\in B$.

\item Let $a\in A$, $b\in B$ be given. We may assume that the leftmost variable in $t'(x,y)$ is $x$: indeed, if $t'(x,y)$ begins with $y$ and $t'(a,b)\in B$ for all $a\in A,b\in B$, then because of step 2 the same holds for the term obtained from the term $t'$ by moving the leftmost $y$ to the end, and this can be repeated until we reach a term that begins with $x$. We have $t'(abb,b)\in B$. The word $t'(abb,b)$ is a word that starts with $a$, ends with $b$, and has no occurrences of two letters $a$ next to each other. Since $b^3\approx b$, this word reduces to a word that has either $b$ or $bb$ between each two successive occurrences of $a$. We can write the obtained word in the form
\begin{equation}\label{sredj}
(ab)^{l_1}b(ab)^{l_2}b(ab)^{l_3}b\cdots
\end{equation}
with either $\cdots(ab)^{l_q}$ or $\cdots(ab)^{l_q}b$ at the end, for some positive integers $l_1,l_2,\dots,l_q$.

Given a word consisting only of the letters $a$ and $b$, by its \emph{difference} we shall mean the absolute value of the difference of the number of occurrences of the letter $a$ at the odd, respectively even positions in the considered word. We can assume that the difference of the word $t'(a,b)$ is nonzero, since otherwise the conclusion we have to reach is $b\in B$, which holds trivially.

Notice that the difference of the word $t'(a,b)$ equals the difference of the word $t'(abb,b)$ (the latter word is obtained by inserting $bb$ after each occurrence of $a$ in the former word, which keeps the parities of the positions at which $a$ appears in the former word), which in turn equals the difference of the word (\ref{sredj}) (replacing $b^3$ by $b$ also keeps the considered parities), which is evaluated to
\begin{equation}\label{diff}
|l_1-l_2+l_3-\cdots+(-1)^{q+1}l_q|.
\end{equation}

If $q=1$, then the word (\ref{sredj}) is $(ab)^{l_1}b$ (it cannot be $(ab)^{l_1}$ because this word has an even length) and its difference is $l_1$. Now from $(ab)^{l_1}b\in B$ and step 2 we obtain $b(ab)^{l_1}\in B$, which was to be proved.

Let $q=2$. Then the word (\ref{sredj}) is $(ab)^{l_1}b(ab)^{l_2}$ (there cannot be $\cdots(ab)^{l_2}b$ at the end, because then the length would be even). The difference of this word, which is assumed to be nonzero, equals $|l_1-l_2|$; therefore, $l_1\neq l_2$, and in particular, $l_1+l_2\geqslant 3$. Now, since $(ab)^{l_1}b(ab)^{l_2}\in B$ and $(ab)^{l_1-1}b(ab)^{l_2-1}\in A$, step 4 gives
\[
((ab)^{l_1-1}b(ab)^{l_2-1})^2((ab)^{l_1}b(ab)^{l_2})((ab)^{l_1-1}b(ab)^{l_2-1})^2\in B.
\]
Further, we have
\[
\renewcommand{\arraystretch}{1.5}
\begin{array}{l}
((ab)^{l_1-1}b(ab)^{l_2-1})^2((ab)^{l_1}b(ab)^{l_2})((ab)^{l_1-1}b(ab)^{l_2-1})^2\\
\hspace{2em}{}=(ab)^{l_1-1}b(ab)^{l_1+l_2-2}b(ab)^{l_1+l_2-1}b(ab)^{l_1+l_2-1}b(ab)^{l_1+l_2-2}b(ab)^{l_2-1}\\
\hspace{2em}{}=(ab)^{l_1-1}b((ab)^{l_1+l_2-2}bab)^3(ab)^{l_1+l_2-3}b(ab)^{l_2-1}\\
\hspace{2em}{}\approx(ab)^{l_1-1}b(ab)^{l_1+l_2-2}bab(ab)^{l_1+l_2-3}b(ab)^{l_2-1}\\
\hspace{2em}{}=(ab)^{l_1-1}b(ab)^{l_1+l_2-2}b(ab)^{l_1+l_2-2}b(ab)^{l_2-1}\\
\hspace{2em}{}=((ab)^{l_1-1}b(ab)^{l_2-1})^3\approx (ab)^{l_1-1}b(ab)^{l_2-1}.
\end{array}
\]
Therefore, if $(ab)^{l_1}b(ab)^{l_2}\in B$, then $(ab)^{l_1-1}b(ab)^{l_2-1}\in B$. Further, note that the difference of the latter word equals $|(l_1-1)-(l_2-1)|=|l_1-l_2|$, that is, this transformation preserves the difference. Repeatedly applying this procedure ultimately leads to $b(ab)^{l_2-l_1}\in B$ or $(ab)^{l_1-l_2}b\in B$ (depending on whether $l_2>l_1$ or $l_1>l_2$). In the former case, the claim is proved. In the latter case, step 2 gives $b(ab)^{l_1-l_2}\in B$, which again proves the claim.

Finally, let $q\geqslant 3$. We assume $l_1\leqslant l_q$. Let us explain why we are allowed to make this assumption. If it were $l_q<l_1$, then from $(ab)^{l_1}b(ab)^{l_2}b\cdots (ab)^{l_q}\in B$ (the case with $\cdots (ab)^{l_q}b$ at the end is similar) we get, by step 2, $(ba)^{l_1}b(ba)^{l_2}\cdots b(ba)^{l_q}\in B$, and now since $l_q<l_1$, everything that follows could be applied as if, informally speaking, it were read from right to left.

We claim that there exists a word that belongs to $B$, starts with $a$, ends with $b$, has no occurrences of two letters $a$ next to each other, has the same difference as the word (\ref{sredj}) and is shorter than (\ref{sredj}).

We first consider the case when $l_1\geqslant l_2$ or $l_1> l_3$. Since $l_1\leqslant l_q$, it follows that $l_2\leqslant l_q$ or $l_3< l_q$, respectively. We conclude that the minimal possible value of $l_i$, $1\leqslant i\leqslant q$, is achieved for at least one $i$ such that $2\leqslant i\leqslant q-1$. For such $i$ we have
\[
\renewcommand{\arraystretch}{1.5}
\begin{array}{l}
(ab)^{l_1}b(ab)^{l_2}b\cdots b(ab)^{l_{i-1}}b(ab)^{l_i}b(ab)^{l_{i+1}}b\cdots\\
\hspace{2em}{}=(ab)^{l_1}b(ab)^{l_2}b\cdots(ba)^{l_{i-1}-l_i}(b(ab)^{l_i})^3(ab)^{l_{i+1}-l_i}b\cdots\\
\hspace{2em}{}\approx(ab)^{l_1}b(ab)^{l_2}b\cdots(ba)^{l_{i-1}-l_i}b(ab)^{l_i}(ab)^{l_{i+1}-l_i}b\cdots\\
\hspace{2em}{}\approx(ab)^{l_1}b(ab)^{l_2}b\cdots b(ab)^{l_{i-1}+l_{i+1}-l_i}b\cdots	
\end{array}
\]
This gives the shorter word we were looking for. Indeed, it is enough to check that the obtained word has the same difference as the word (\ref{sredj}) (all the other requirements are immediately clear). And indeed, the difference of the obtained word equals
\[
\left|\sum_{j=1}^{i-2}(-1)^{j+1}l_j+(-1)^i(l_{i-1}+l_{i+1}-l_i)+\sum_{j=i+2}^q(-1)^{j+1}l_j\right|,
\]
which is easily seen to be equal to (\ref{diff}).

Let now $l_1<l_2$ and $l_1\leqslant l_3$. Since the word (\ref{sredj}) is in $B$ and $b(ab)^{l_2-l_1}\in A$, step 4 gives
\[
(b(ab)^{l_2-l_1})(b(ab)^{l_2-l_1})((ab)^{l_1}b(ab)^{l_2}b(ab)^{l_3}b\cdots)\in B.
\]
Further, we have
\[
\renewcommand{\arraystretch}{1.5}
\begin{array}{l}
(b(ab)^{l_2-l_1})(b(ab)^{l_2-l_1})((ab)^{l_1}b(ab)^{l_2}b(ab)^{l_3}b\cdots)\\
\hspace{2em}{}=b(ab)^{l_2-l_1}b(ab)^{l_2}b(ab)^{l_2}b(ab)^{l_3}b\cdots\\
\hspace{2em}{}=b((ab)^{l_2-l_1}b(ab)^{l_1})^3(ab)^{l_3-l_1}b\cdots\\
\hspace{2em}{}\approx b(ab)^{l_2-l_1}b(ab)^{l_1}(ab)^{l_3-l_1}b\cdots=b(ab)^{l_2-l_1}b(ab)^{l_3}b\cdots
\end{array}
\]
Now, since $b(ab)^{l_2-l_1}b(ab)^{l_3}b\cdots\in B$, by step 2 it follows that the word obtained from this word by moving the letter $b$ from the beginning to the end (and possibly applying $b^3\approx b$ at the end, in case that $b^3$ appears there) also belongs to $B$. This gives the shorter word we were looking for. Indeed, it is again enough to check only that the obtained word has the same difference as the word (\ref{sredj}), which follows by noting that the difference of the obtained word equals
\[
\left|(l_2-l_1)+\sum_{j=3}^q(-1)^jl_j\right|=\left|\sum_{j=1}^q(-1)^jl_j\right|=\left|\sum_{j=1}^q(-1)^{j+1}l_j\right|.
\]

To conclude the proof, we note that repeatedly applying this procedure of ``shortening the word" ultimately leads to a word of the form treated in one of the cases $q=1$ or $q=2$, from where we reach the desired conclusion in the already demonstrated way.

\item Let $b(ab)^l\in B$. By step 2, $(ab)^lb\in B$. Since $aab\in B$ (because of step 4), it follows that
\[
(ab)^laab=(ab)^{l-1}abaab\approx (ab)^{l-1}ab^3aab=((ab)^lb)b(aab)\in B.
\]
Denote $a'=(ab)^{l-1}a$ and $b'=(ab)^laab$. Since $b'\in B$, by step 4 we get $a'a'b'a'a'\in B$. Note that
\[
\renewcommand{\arraystretch}{1.5}
\begin{array}{r@{}l}
a'a'b'a'a'&{}=((ab)^{l-1}a)((ab)^{l-1}a)((ab)^laab)((ab)^{l-1}a)((ab)^{l-1}a)\\
&{}=(ab)^{l-1}a(ab)^{l-1}a(ab)^la(ab)^la(ab)^{l-1}a\\
&{}=(ab)^{l-1}a(ab)^{l-2}(aba(ab)^{l-1})^3a\\
&{}\approx(ab)^{l-1}a(ab)^{l-2}(aba(ab)^{l-1})a=((ab)^{l-1}a)^3\approx (ab)^{l-1}a.
\end{array}
\]
This completes the proof.

\item Let $o_i$, respectively $e_i$, denote the number of occurrences of the letter $x_i$ at the odd, respectively even, positions in the word $t(x_1,x_2,\dots,x_m)$ (we recall that $t$ is an absorbing term). Then the difference of the word $t(b,\dots,b,a,b,\dots,b)$ ($a$ is at the $i^{\text{th}}$ coordinate) equals $|o_i-e_i|$. We claim that these differences, for $1\leqslant i\leqslant m$, are coprime (not necessarily pairwise coprime). Suppose the opposite: there exists a prime number $p$ such that $p\mid |o_i-e_i|$ for each $i$, $1\leqslant i\leqslant m$. Then
\[
p\mid\sum_{i=1}^m(o_i-e_i)=\sum_{i=1}^mo_i-\sum_{i=1}^me_i=\left\lceil\frac k 2\right\rceil-\left\lfloor\frac k 2\right\rfloor=1
\]
(we recall that $k$ is the length of $t$), which is a contradiction. Therefore, the considered differences are coprime.

By step 5, if $l_i$ is any of these differences, then $b(ab)^{l_i}\in B$ for all $a\in A,b\in B$. If $l_i$ and $l_j$ are any two of these differences, then
\[
b(ab)^{l_i+l_j}=b(ab)^{l_i}(ab)^{l_j}\approx(b(ab)^{l_i})b(b(ab)^{l_j})\in B.
\]
Therefore, $b(ab)^l\in B$ whenever $l$ is any linear combination with nonnegative integer coefficients of the considered differences. Since these differences are coprime, any large enough positive integer can be represented as a linear combination of them. In particular, there indeed exists a positive integer $l$ such that $b(ab)^l\in B$ and $b(ab)^{l+1}\in B$.

\item Let $l_0$ be the least positive integer such that $b(ab)^{l_0}\in B$ and $b(ab)^{l_0+1}\in B$ for any $a\in A$, $b\in B$ (such a number exists by the previous step). We need to prove that $l_0=1$. Aiming for a contradiction, suppose that $l_0\geqslant 2$. We shall prove that $b(ab)^{l_0-1}\in B$ for any $a\in A$, $b\in B$, which, together with the assumed $b(ab)^{l_0}\in B$, contradicts the minimality of $l_0$.

We first treat the case $l_0=2$. Let $a\in A$, $b\in B$ be given, and let us prove that $bab\in B$. We have $b(ab)^2\in B$ and $b(ab)^3\in B$. By step 6, we have $aba\in B$ and $ababa\in B$. Since, by step 4, $aab\in B$ and $baa\in B$, we conclude
\[
(baa)(aba)(ababa)(aba)(baa)b(aab)\in B.
\]
Further, we have
\[
\renewcommand{\arraystretch}{1.5}
\begin{array}{r@{}l}
(baa)(aba)(ababa)(aba)(baa)b(aab)&{}=baa(abaab)^3aab\approx baaabaabaab\\
&{}=ba(aab)^3\approx baaab\approx bab.
\end{array}
\]
This proves the case $l_0=2$.

Let now $l_0\geqslant 3$. Let $a\in A$, $b\in B$ be given, and let us prove that $b(ab)^{l_0-1}\in B$. Denote $a'=bab$ and $b'=abbab$. By step 3, we have $b'\in B$. Therefore, $b'(a'b')^{l_0}\in B$ and $b'(a'b')^{l_0+1}\in B$. By step 6, we have $(a'b')^{l_0-1}a'\in B$ and $(a'b')^{l_0}a'\in B$. By step 4, we have $a'a'b'\in B$. Therefore,
\[
((a'b')^{l_0-1}a')((a'b')^{l_0}a')((a'b')^{l_0-1}a')b'(a'a'b')\in B.
\]
Further, we have
\[
\renewcommand{\arraystretch}{1.5}
\begin{array}{l}
((a'b')^{l_0-1}a')((a'b')^{l_0}a')((a'b')^{l_0-1}a')b'(a'a'b')\\
\hspace{2em}{}=(a'b')^{l_0-1}a'(a'b')^{l_0}a'(a'b')^{l_0}a'a'b'=((a'b')^{l_0-1}a'a'b')^3\\
\hspace{2em}{}\approx(a'b')^{l_0-1}a'a'b'=((bab)(abbab))^{l_0-1}(bab)(bab)(abbab)\\
\hspace{2em}{}=(bababbab)^{l_0-2}babab(bab)^3abbab\\
\hspace{2em}{}\approx (bababbab)^{l_0-2}bababbababbab=(bababbab)^{l_0-3}bababb(abbab)^3\\
\hspace{2em}{}\approx (bababbab)^{l_0-3}bababbabbab=ba((bab)^2ba)^{l_0-2}b\\
\hspace{2em}{}\approx ba(ba)^{l_0-2}b=b(ab)^{l_0-1}.\\
\end{array}
\]
(Between the last and the next to last row we used the fact that after each $(bab)^2$ there is another $bab$ following, and thus, because of $(bab)^3\approx bab$, we may simply erase each such $(bab)^2$.) This completes the proof.

\item Let $a\in A$, $b\in B$ be given. By the previous step, we have $bab\in B$. By step 2, we now obtain $ab^2\in B$ and $b^2a\in B$, which was to be proved.
\end{enumerate}

The proof of Theorem \ref{tern} is thus finished.
\end{proof}

\noindent For the end, we prove a proposition that shows that the requirement that $\mathbf A$ is idempotent from the previous theorem is, in a way, not so restrictive as it might seem to be.

\begin{prop}
Assume that Conjecture \ref{conj} holds for all idempotent $n$-ary semigroups. Then Conjecture \ref{conj} holds in general.
\end{prop}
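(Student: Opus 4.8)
The plan is to reduce the general conjecture to the idempotent case by passing to a quotient. First, two preliminary reductions. By the proof of Theorem~\ref{t1} the implications $(2)\Rightarrow(3)$ and $(3)\Rightarrow(1)$ hold for \emph{every} $n$-ary semigroup, and $(3)\Rightarrow(2)$ is immediate: taking $a_1=a$, $a_2=\dots=a_n=b$ in (3) gives $ab^{n-1}\in B$, and $a_n=a$, $a_1=\dots=a_{n-1}=b$ gives $b^{n-1}a\in B$. Hence $(2)$ and $(3)$ are always equivalent, and it suffices to prove $(1)\Rightarrow(3)$ for an arbitrary $n$-ary semigroup $\mathbf A=(A,f)$ and $\mathbf B\leqslant\mathbf A$, granting Conjecture~\ref{conj} for idempotent $n$-ary semigroups. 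So let $\mathbf B\trianglelefteq\mathbf A$, fix an absorbing term $t$, and let $k$ be its length; then $a^k\approx a$ for all $a\in A$, which disposes of the ``there exists $k>1$'' clause of~(3), and, since $t$ is built from $f$, also $n-1\mid k-1$. It remains to show that $f(a_1,\dots,a_n)\in B$ whenever some $a_i\in B$.

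Now let $\theta:=\mathrm{Cg}_{\mathbf A}\bigl(\{(f(a,\dots,a),a):a\in A\}\bigr)$ be the least congruence for which $\mathbf A/\theta$ is idempotent, and let $\pi\colon\mathbf A\to\mathbf A/\theta$ be the quotient map. A term that is idempotent and absorbing for $\mathbf B\leqslant\mathbf A$ remains idempotent and absorbing for $\pi[\mathbf B]\leqslant\mathbf A/\theta$, so $\pi[\mathbf B]\trianglelefteq\mathbf A/\theta$. Since $\mathbf A/\theta$ is idempotent, the hypothesis applies and gives that $\mathbf A/\theta$ satisfies~(3); pulling this back along $\pi$, the $\theta$-saturation $B^{\ast}:=\pi^{-1}\bigl(\pi[B]\bigr)$ of $B$ (a subalgebra of $\mathbf A$ with $B\subseteq B^{\ast}$) satisfies~(3) in $\mathbf A$: every value $f(a_1,\dots,a_n)$ with some $a_i\in B^{\ast}$ lies in $B^{\ast}$.

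It therefore all comes down to the equality $B^{\ast}=B$, i.e.\ to showing that $\mathbf B\trianglelefteq\mathbf A$ forces $B$ to be a union of $\theta$-classes; this is the heart of the proof and the step I expect to be the main obstacle. The natural line of attack is the kind of manipulation used in the proof of Theorem~\ref{tt1}: if a power $c^{j}$ (with $j\equiv1\pmod{n-1}$, so that it is defined) lies in $B$, then, $B$ being a subalgebra, so do $f(c^{j},\dots,c^{j})=c^{jn}$ and hence all $c^{jn^{r}}$; evaluating the absorbing term $t$ with $c$ placed in a single coordinate and powers of $c$ that already lie in $B$ in the remaining coordinates, then multiplying several of the resulting members of $B$ together and iterating, yields $c^{M}\in B$ for a whole family of exponents $M$, and one uses the abundance of such $M$ (with $n-1\mid k-1$ keeping all intermediate words at admissible lengths) to reach some $M\equiv1\pmod{k-1}$, so that $c=c^{M}\in B$; closing this up gives the $\theta$-saturation of $B$. (For $n=2$ this is automatic: Theorem~\ref{tt1} says $\mathbf B\trianglelefteq\mathbf A$ means precisely that $B$ is an ideal of a completely regular semigroup, and such an ideal is a union of the completely simple components, hence of the $\theta$-classes; for $n\geqslant3$ one likely needs in addition either a more careful use of the absorbing term or a rudimentary Green's-relations theory for $n$-ary semigroups.) Once $B^{\ast}=B$ is established, $(\mathbf A,\mathbf B)$ satisfies~(3), hence~(2), and Conjecture~\ref{conj} holds in general.
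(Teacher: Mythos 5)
Your reduction has a genuine gap at exactly the point you yourself flag: after passing to the quotient $\mathbf A/\theta$ by the least idempotent congruence, the hypothesis only yields condition (3) for the saturation $B^{\ast}=\pi^{-1}(\pi[B])$, whereas the conjecture requires it for $B$ itself. The equality $B^{\ast}=B$ is never proved, and the sketch you offer addresses only a fragment of what is needed: showing that $c^{j}\in B$ forces $c\in B$ would give closure of $B$ under the \emph{generating} pairs $(f(c,\dots,c),c)$, but a typical elementary $\theta$-step replaces a factor $c^{n}$ by $c$ \emph{inside} a longer product $u\,c^{n}\,v$, and the congruence generated is the transitive, compatible closure of all such steps; nothing in your argument shows that $B$ is a union of the resulting classes. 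Your own appeal to ``a rudimentary Green's-relations theory for $n$-ary semigroups'' for $n\geqslant 3$ concedes that the key lemma is missing, so as written the proof does not go through.

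The paper sidesteps the whole issue by changing the \emph{arity} rather than the algebra: keep the underlying set $A$ and replace $f$ by the derived $k$-ary operation $f'(x_1,\dots,x_k)=x_1x_2\cdots x_k$, where $k$ is the length of the absorbing term $t$. Since $a^{k}\approx a$, the $k$-ary semigroup $\mathbf A'=(A,f')$ is idempotent on the nose, $B$ is still a subuniverse, and $t$ is still an absorbing term for $\mathbf B$ in $\mathbf A'$ (it is a single application of $f'$ to a list of, possibly repeated, variables), so $\mathbf B\trianglelefteq\mathbf A'$. The idempotent case of the conjecture, applied to $\mathbf A'$, gives $ab^{k-1}\in B$ and $b^{k-1}a\in B$, and the one-line computation $b^{n-1}a\approx b^{n+k-2}a=b^{n-1}(b^{k-1}a)\in B$ (with its mirror image) finishes the proof. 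If you want to salvage the quotient approach you would first have to prove the saturation lemma $B^{\ast}=B$, which for $n\geqslant 3$ looks like a nontrivial result in its own right; the change-of-arity trick avoids the question entirely.
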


\begin{proof}
As before, it is enough to prove only the direction $\text{(1)}\Rightarrow\text{(2)}$. Let $\mathbf B\trianglelefteq \mathbf A$, and let $t(x_1,x_2,\dots,x_m)$ be an absorbing term. Let $k$ be the length of $t$. Then the algebra $\mathbf A'=(A,f')$, where $f'(x_1,x_2,\dots,x_k)=x_1x_2\cdots x_k$, is a $k$-ary idempotent semigroup. The term $t$ is also an absorbing term for $\mathbf B$ in $\mathbf A'$, that is, $\mathbf B\trianglelefteq \mathbf A'$. Therefore, by the assumed special case of Conjecture \ref{conj}, $b^{k-1}a\in B$ and $ab^{k-1}\in B$ for each $a\in A$, $b\in B$. Of course, the same also holds in $\mathbf A$. From here it is easy to prove that $b^{n-1}a\in B$ and $ab^{n-1}\in B$ for each $a\in A$, $b\in B$; indeed:
\[
b^{n-1}a\approx b^{n+k-2}a=b^{n-1}(b^{k-1}a)\in B
\]
and an analogous reasoning shows $ab^{n-1}\in B$.
\end{proof}

\section*{Acknowledgments}
The author would like to thank the two anonymous referees for thorough reading of the paper and many useful comments, and in particular for a suggestion about a possible application of the results from the paper mentioned in the Introduction.

The research was supported by the Ministry of Science and Technological Development of Serbia (project 174006) and by the Provincial Secretariat for Science and Technological Development, Autonomous Province of Vojvodina (project ``Ordered structures and applications").

%% in general the use of bibtex is encouraged


\begin{thebibliography}{10}
\bibitem{6}L. Barto, The dichotomy for conservative constraint satisfaction problems revisited, in: \emph{Proceedings of the 26th Annual IEEE Symposium on Logic in Computer Science (LICS 2011)}, Toronto, Ontario, Canada, 2011, pp. 301--310.
\bibitem{k3}L. Barto \& A. Kazda \& J. Bul\'in, The distance from congruence distributivity to near unanimity, \emph{General Algebra and Its Applications (GAIA2013)}, Melbourne, Australia, 2013.
\bibitem{3}L. Barto \& M. Kozik, Congruence distributivity implies bounded width, \emph{SIAM Journal on Computing} {\bf 39} (2010), 1531--1542.
\bibitem{4}L. Barto \& M. Kozik, Constraint satisfaction problems of bounded width, in: \emph{Proceedings of the 50th Symposium on Foundations of Computer Science (FOCS'09)}, 2009, pp. 595--603.
\bibitem{5}L. Barto \& M. Kozik, Absorbing subalgebras, cyclic terms, and the constraint satisfaction problem, \emph{Logical Methods in Computer Science} {\bf 8} (2012), 1--26.
\bibitem{1}L. Barto \& M. Kozik \& T. Niven, Graphs, polymorphisms and the complexity of homomorphism problems, in: \emph{Proceedings of the 40th annual ACM symposium on Theory of computing (STOC'08)}, New York, USA, 2008, pp. 789--796.
\bibitem{2}L. Barto \& M. Kozik \& T. Niven, The CSP dichotomy holds for digraphs with no sources and no sinks (a positive answer to a conjecture of Bang-Jensen and Hell), \emph{SIAM J. Comput.} {\bf 38} (2009), 1782--1802.
\bibitem{bul}A. A. Bulatov, Complexity of conservative constraint satisfaction problems, \emph{ACM Trans. Comput. Log.} {\bf 12} (2011), Art. 24, 66 pp.
\bibitem{bjk}A. Bulatov \& P. Jeavons \& A. Krokhin, Classifying the complexity of constraints using finite algebras, \emph{SIAM J. Comput.} {\bf 34} (2005), 720--742. 
\bibitem{k1}J. Bul\'in, Decidability of absorption in relational structures of bounded width, \emph{Algebra Universalis} {\bf 72} (2014), 15--28.
\bibitem{mck}B. A. Davey \& L. Heindorf \& R. McKenzie, Near unanimity: an obstacle to general duality theory, \emph{Algebra Universalis} {\bf 33} (1995), 428--439.
\bibitem{fv}T. Feder \& M. Y. Vardi, The computational structure of monotone monadic SNP and constraint satisfaction: a study through Datalog and group theory, \emph{SIAM J. Comput.} {\bf 28} (1999), 57--104.
\bibitem{jcg}P. Jeavons \& D. Cohen \& M. Gyssens, Closure properties of constraints, \emph{J. ACM} {\bf 44} (1997), 527--548. 
%\bibitem{groups}A. Kazda, Absorption in idempotent reducts of groups, \emph{The 50th Summer School on Algebra and Ordered Sets}, High Tatras, Slovakia, 2012.
\bibitem{k2}A. Kazda, How to decide absorption, \emph{The 4th Novi Sad Algebraic Conference (NSAC2013)}, Novi Sad, Serbia, 2013.
\bibitem{miklos}M. Mar\'oti, The existence of a near-unanimity term in a finite algebra is decidable, \emph{J. Symbolic Logic} {\bf 74} (2009), 1001--1014.
\bibitem{mm}M. Mar\'oti \& R. McKenzie, Existence theorems for weakly symmetric operations, \emph{Algebra Universalis} {\bf 59} (2008), 463--489. 


\end{thebibliography}
\end{document}